\ttfamily\color{green!50!black},
\title{Analysis of Logarithmic Amortised Complexity%
}
\titlerunning{Logarithmic Amortised Complexity}
\author{Martin Hofmann~$\dag$}{Department of Computer Science\\
LMU Munich, Germany}{}{}{}
\author{Georg Moser}{Department of Computer Science \\
University of Innsbruck, Austria}{georg.moser@uibk.ac.at}{http://orcid.org/0000-0001-9240-6128}{Partly supported by DARPA/AFRL contract number FA8750-17-C-088.}
\authorrunning{M.~Hofmann and G.~Moser}
\subjclass{F.3.2 Program Analysis}
\keywords{analysis of algorithms,
  amortised resource analysis,
  functional programming, 
  self-adjusting data structures}
\begin{document}

\maketitle

\begin{abstract}
We introduce a novel amortised resource analysis based on a potential-based
type system. This type system gives rise to logarithmic and polynomial bounds 
on the runtime complexity and is the first such system to exhibit logarithmic
amortised complexity. We relate the thus obtained automatable amortised resource analysis to manual amortised analyses of self-adjusting data structures, like splay trees, that can be found in the literature.
\end{abstract}

\section{Introduction}
In a series of papers a number of researchers including the present
authors, see~\cite{HofmannJ03,HoffmannH10a,HAH11,HAH:2012b,JHLH10,JLHSH09,JLHSH09,HM:2014,HM:2015,MS:2018a}
to name just a few, have explored the area of type-based automated amortised analysis which
lead to several successful tools for deriving accurate bounds on the
resource usage of functional \cite{HAH:2012b,JLHSH09} and also
imperative programs \cite{HR:2013,HS14}, as well as term rewriting \cite{avanzini2016tct,MS:2018a}. 
Accordingly type-based amortised analysis has been employed on a variety of cost metrics. While initially confined to
linear resource bounds \cite{HofmannJ03} the methods were subsequently
extended to cover polynomial \cite{HoffmannH10a}, multivariate polynomial
\cite{HAH:2012b}, and also exponential bounds \cite{HR:2013}.
However, the automated analysis of sublinear, in particular logarithmic resource
bounds remained elusive.

One notable exception is \cite{Nipkow:2015} where
the correct amortised analysis of splay trees\cite{ST:1985,Tarjan:1985} and other data
structures is certified in \isabelle\ with some tactic
support. However, although the analysis is naturally coached in a type system (namely the type system of the proof assistant \textsf{Isabelle}, one cannot speak there of fully automated analysis.
It is in fact not at all clear how a formalisation in an interactive theorem prover as such leads to automation.

It is the purpose of this paper to open up a route towards the
automated, type-based derivation of logarithmic amortised cost. While
we do not yet have a prototype implementation, let alone viable experimental
data, we make substantial progress in that we present a system akin to the multivariate
analysis from \cite{HAH11,HAH:2012b} which reduces the task of justifying a
purported logarithmic complexity bound to the validity of a
well-defined set of inequalities involving linear arithmetic and
logarithmic terms. We also give concrete ideas as to how one can
infer logarithmic bounds efficiently by using an \emph{Ansatz} with unknown
coefficients.

Our analysis is coached in a simple core functional language just sufficiently rich to
provide a full definition of our motivating example: \emph{splaying}. We employ a big-step
semantics, following similar approaches in the literature. However, this implies that our
resource analysis requires termination, which is typically not the case. It is straightforward
to provide a partial big-step semantics~\cite{HoffmannH10b} or a small-step semantics~\cite{MS:2018a}
to overcome this assumption. Furthermore, the proposed type system is geared towards runtime
as computation cost. Again it would not be difficult to provide a parametric type system. We consider
both issues as complementary to our main agenda.

\paragraph*{Organisation}

The rest of this paper is organised as follows. In the next section we introduce a
simple core language underlying our reasoning and provide a full definition of splaying, our
running example. In Section~\ref{Primer} we provide background and a high-level description
of our approach. The employed notion of potential function is provided in Section~\ref{ResourceFunctions},
while our main result is established in Section~\ref{Typesystem}. In Section~\ref{Analysis} we employ the established
type system to splaying, while in Section~\ref{Implementation}, we
clarify the aforementioned \emph{Ansatz} to infer logarithmic bounds. Finally, we conclude in Section~\ref{Conclusion}.

\section{Motivating Example}
\label{Splaying}

In this section, we introduce the syntax of a suitably defined core (first-order) programming language 
to be used in the following. Furthermore, we recall the definition of \emph{splaying}, following the presentation
by Nipkow in~\cite{Nipkow:2015}. Splaying constitutes the motivating examples for the type-based logarithmic
amortised resource analysis presented in this paper.

To make the presentation more succinct, we assume only the following types: 
Booleans ($\Bool = \{\true, \false\}$), an abstract base type $B$, product types, and a type $T$ of binary trees 
whose internal nodes are labelled with elements $\typed{a}{B}$.
Elements $\typed{t}{T}$ are defined by the following grammar which fixes notation. 
\begin{equation*}
  t ::= \nil \mid \tree{t}{a}{t} \tpkt
\end{equation*}
The size of a tree is the number of leaves: $\size{\nil} \defsym 1$, 
$\size{\tree{t}{a}{u}} \defsym \size{t} + \size{u}$.

Expressions are defined as follows and given in \emph{let normal form} to
simplify the presentation of the semantics and typing rules. In order to ease
the readability, we make use of some mild syntactic sugaring in the presentation
of actual code.

\begin{definition}
\begin{align*}
  cmp & ::= \text{\lstinline{<}} \mid \text{\lstinline{>}} \mid
        \text{\lstinline{=}}
  \\
  e & ::= \true\ | \false
    \mid x
    \mid e~cmp~e 
    \mid \cif\ e\ \cthen\ e\ \celse\ e
  \\
   & \mid \vlet\ x~\equal~e\ \vin\ e 
    \mid f(x,\dots,x) 
   \\
   & \mid \match\ x\ \with\ 
      \text{\lstinline{|}} \nil\ \arrow e
      \text{\lstinline{|}} \tree{x}{x}{x}\ \arrow e
\end{align*}
\end{definition}

We skip the standard definition of integer constants $n \in \Z$ as well as variable
declarations, cf.~\cite{Pierce:2002}. Furthermore, we omit binary operations and focus on the bare
essentials for the comparison operators. For the resource analysis these are not of importance,
as long as we assume that no actual costs are emitted.

A \emph{typing context} is a mapping from variables $\VS$ to types. 
Type contexts are denoted by upper-case Greek letters.
A program $\Program$ consists of a signature $\FS$ together with a set of function definitions of the form
$f(x_1,\dots,x_n) = e$, where the $x_i$ are variables and $e$ an expression. 
A \emph{substitution} or (\emph{environment}) $\sigma$ is a mapping from variables to values that respects
types. Substitutions are denoted as sets of assignments: $\sigma = \{x_1 \mapsto t_1, \dots, x_n \mapsto t_n\}$.
We write $\dom(\sigma)$ ($\range(\sigma)$) to denote the domain (range) of $\sigma$. 
Let $\sigma$, $\tau$ be substitutions such that $\dom(\sigma) \cap \dom(\tau) = \varnothing$. Then we
denote the (disjoint) union of $\sigma$ and $\tau$ as $\sigma \dunion \tau$.
We employ a simple cost-sensitive big-step semantics, whose rules are given in Figure~\ref{fig:4}. The judgement
$\eval{\sigma}{m}{e}{v}$ means that under environment $\sigma$, expression $e$ is evaluated to value $v$ in exactly $m$ steps. Here only rule applications emit (unit) costs.

\begin{figure}[t]
\begin{center}
\begin{tabular}{ccc}
  $\infer{\eval{\sigma}{0}{b}{b}}{%
  b \in \{\true,\false\}
  }$
  &
  $\infer{\eval{\sigma}{0}{\nil}{\nil}}{}$
  &
  $\infer{\eval{\sigma}{0}{\tree{x_1}{x_2}{x_3}}{\tree{t}{a}{u}}}{%
      x_1\sigma = t
      &
      x_2\sigma = a
      &
      x_3\sigma = u
  }$
  \\[2ex]
  $\infer{\eval{\sigma}{0}{x}{v}}{%
     x\sigma = v%
     }$  
  &
  $\infer{\eval{\sigma}{m+1}{f(x_1,\ldots,x_k)}{v}}{%
     f(x_1,\ldots,x_k) = e \in \Program
     & \eval{\sigma}{m}{e}{v}
   }$
  &
  $\infer{\eval{\sigma}{0}{x_1~cmp~x_2}{b}}{%
      \text{$b$ is value of $x_1\sigma~cmp~x_2\sigma$}
     }$
  \\[2ex]
\multicolumn{3}{c}{%
  $\infer{\eval{\sigma}{m}{\cif\ x\ \cthen\ e_1\ \celse\ e_2}{v}}{%
     \eval{\sigma}{0}{x}{\true}
     &
     \eval{\sigma}{m}{e_1}{v}
  }$
  \hfill
  $\infer{\eval{\sigma}{m}{\cif\ x\ \cthen\ e_1\ \celse\ e_2}{v}}{%
     \eval{\sigma}{0}{x}{\false}
     &
     \eval{\sigma}{m}{e_2}{v}
  }$
}
  \\[2ex]
  \multicolumn{3}{c}{%
  $\infer{\eval{\sigma}{m}{\vlet\ x~\equal~e_1\ \vin\ e_2}{v}}{%
     \eval{\sigma}{m_1}{e_1}{v'}
     &
     \eval{\sigma[x \mapsto v']}{m_2}{e_2}{v}
     &
     m = m_1 + m_2
  }$}
  \\[2ex]
  \multicolumn{3}{c}{%
  $\infer{\eval{\sigma}{m}{\match\ x~\with\ 
       \begin{array}[t]{l}
         \text{\lstinline{|}} \nil\ \arrow e_1 \\
         \text{\lstinline{|}} \tree{x_1}{x_2}{x_3}\ \arrow e_2
       \end{array}
  }{v}}{%
    x\sigma = \nil
    &
    \eval{\sigma}{m}{e_1}{v}
  }$
  \hfill \hspace{5ex} \hfill
  $\infer{\eval{\sigma}{m}{\match\ x~\with\
      \begin{array}[t]{l}
        \text{\lstinline{|}} \nil\  \arrow e_1 \\
        \text{\lstinline{|}} \tree{x_1}{x_2}{x_3}\ \arrow e_2
      \end{array}
    }{v}}{%
    x\sigma = \tree{t}{a}{u}
    &
    \eval{\sigma'}{m}{e_2}{v}  
  }$}
\end{tabular}
\end{center}
Here $\sigma[x \mapsto v']$ denotes the update of the environment $\sigma$ such that $\sigma[x \mapsto v'](x) = v'$ and the value of all other variables remains unchanged. Furthermore, in the second $\match$ rule, we set $\sigma' \defsym \sigma \dunion \{x_0 \mapsto t, x_1 \mapsto a, x_2 \mapsto u\}$.
\caption{Big-Step Semantics}
\label{fig:4}
\end{figure}

\begin{figure}[t]
\centering
\begin{lstlisting}[mathescape]
splay a t = match t with
    | $\nil$ -> $\nil$
    | $\tree{cl}{c}{cr}$ -> 
       if a = c then $\tree{cl}{c}{cr}$
       else if a < c then match cl with 
          | $\nil$ -> $\tree{cl}{c}{cr}$
          | $\tree{bl}{b}{br}$ -> 
             if a=b then $\tree{bl}{a}{\tree{br}{c}{cr}}$
             else if a<b                        
                  then if bl=$\nil$ then $\tree{bl}{b}{\tree{br}{c}{cr}}$
                       else match splay a bl with
                          | $\tree{al}{a'}{ar}$ -> $\tree{al}{a'}{%
                                                 \tree{ar}{b}{\tree{br}{c}{cr}}%
                                                 }$
                  else if br=$\nil$ then $\tree{bl}{b}{\tree{br}{c}{cr}}$
                       else match splay a br with
                          | $\tree{al}{a'}{ar}$ -> $\tree{\tree{bl}{b}{al}}%
                                                     {a'}%
                                                     {\tree{ar}{c}{cr}}$
            else match cr with 
               | $\nil$ -> $\tree{cl}{c}{cr}$
               | $\tree{bl}{b}{br}$ ->
                   if a=b then $\tree{\tree{cl}{c}{bl}}{a}{br}$
                   else if a<b
                        then if bl=$\nil$ then $\tree{\tree{cl}{c}{bl}}{b}{br}$
                             else match splay a bl with
                                  | $\tree{al}{a'}{ar}$ -> $\tree{\tree{cl}{c}{al}}%
                                                             {a'}%
                                                             {\tree{ar}{b}{br}}$
                        else if br=$\nil$ then $\tree{\tree{cl}{c}{bl}}{b}{br}$
                             else match splay a br with
                                  | $\tree{al}{x}{xa}$ -> $\tree{\tree{%
                                                           \tree{cl}{c}{bbl}%
                                                           }{b}{al}}{x}{xa}$
\end{lstlisting}
\caption{Function \lstinline{splay}.}
\label{fig:1}
\end{figure}

\emph{Splay trees} have been introduced by Sleator and Tarjan~\cite{ST:1985,Tarjan:1985} 
as self-adjusting binary search trees
with strictly increasing inorder traversal. There is no explicit
balancing condition. All operations rely on a tree rotating operation
dubbed \emph{splaying}; \lstinline{splay a t} is performed by rotating element $a$ to the root of tree $t$
while keeping inorder traversal intact. If $a$ is not contained in $t$, then the last element found before
$\nil$ is rotated to the tree. The complete definition is given in Figure~\ref{fig:1}.
Based on splaying, searching is performed by splaying with the sought element and comparing to the root
of the result. Similarly, the definition of insertion and deletion depends on splaying. 
Exemplary the definition of insertion is given in Figure~\ref{fig:2}. 
See also~\cite{Nipkow:2015} for full algorithmic, formally verified, descriptions.

All basic operations can be performed in  $\bO(\log n)$ amortised runtime. The logarithmic amortised
complexity is crucially achieved by local rotations of subtrees in the definition of \lstinline{splay}. 
Amortised cost analysis of splaying has been provided for example by Sleator and Tarjan~\cite{ST:1985},
Schoenmakers~\cite{Schoenmakers93}, Nipkow~\cite{Nipkow:2015}, Okasaki~\cite{Okasaki:1999}, among
others. 
Below, we follow Nipkow's approach, where the actual cost of splaying is measured by counting the
number of calls to $\text{\lstinline{splay}} \colon B \times T \to T$. 

\begin{figure}[t]
\centering
\begin{lstlisting}[mathescape]
insert a t = if t=$\nil$ then $\tree{\nil}{a}{\nil}$
             else match splay a t with
                  | $\tree{l}{a'}{r}$ -> 
                    if a=a' then $\tree{l}{a}{r}$
                    else if a<a' then $\tree{l}{a}{\tree{\nil}{a'}{r}}$
                         else $\tree{\tree{l}{a'}{\nil}}{a}{r}$
\end{lstlisting}
\caption{Function \lstinline{insert}.}
\label{fig:2}
\end{figure}

\begin{figure}[t]
\centering
\begin{lstlisting}[mathescape]
delete a t = if t=$\nil$ then $\nil$
   else match splay a t with
        | $\tree{l}{a'}{r}$ -> 
          if a=a' then if l=$\nil$ then $r$ 
                       else match splay_max l with
                       | $\tree{l'}{m}{r'}$ -> $\tree{l'}{m}{r}$
          else $\tree{l}{a'}{r}$


splay_max t = match t with
   | $\nil$ -> $\nil$
   | $\tree{l}{b}{r}$ -> match r with 
       | $\nil$ -> $\tree{l}{b}{\nil}$
       | $\tree{rl}{c}{rr}$ -> 
         if rr=$\nil$ then $\tree{\tree{l}{b}{rl}}{c}{\nil}$ 
                      else match splay_max rr with
                           | $\tree{rrl}{x}{xa}$ -> $\tree{\tree{\tree{l}{b}{rl}}{c}{rrl}}{x}{xa}$
\end{lstlisting}
\caption{Functions \lstinline{delete} and \lstinline{splay_max}.}
\label{fig:3}
\end{figure}

\section{Background and Informal Presentation}
\label{Primer}

To set the scene we briefly review the general approach up to and
including the multivariate polynomial analysis from Hoffmann et al.~\cite{HAH11,HAH:2012b,HM:2015}. 

\paragraph*{Univariate Analysis}

Suppose that we have types $A,B,C,\dots$ representing sets of values. 
We write $\typedomain{A}$  for the set of values represented by type $A$.  
Types may be constructed from base types by type formers such as list, tree, 
product, sum, etc. 

For each type $A$ we have a, possibly infinite, set of
\emph{basic potential functions} $\BF(A)\colon \typedomain{A} \to \Rplus$. Thus, if $p\in \BF(A)$ and $v\in \typedomain{A}$
then $p(v)\in \Rplus$.  It is often useful to regard $\BF(A)$
as set of \emph{names} for basic potential functions. In this case,
we have a function
$\langle-,-\rangle:\BF(A)\times\typedomain{A}\rightarrow
\Rplus$. To ease notation, one then sometimes writes $p(v)$, instead
of $\langle p,v\rangle$. 

An \emph{annotated type} is a pair of a type $A$ and a function
$Q:\BF(A)\rightarrow \Rplus$ providing a coefficient for each
\emph{basic potential function}. The function $Q$ must be zero on all but
finitely many basic potential functions.
For each annotated type $\annotatedtype{A}{Q}$, the \emph{potential function} $\phi_Q:
\typedomain{A}\rightarrow \Rplus$ is given by
\begin{equation*}
  \phi_Q(v) \defsym \sum_{p\in \BF(A)} Q(p) \cdot p(v) \tpkt
\end{equation*}
Now suppose that we have a function $f \colon A_1 \times\dots\times A_n \rightarrow B$ and
that the actual cost for computing $f(v_1,\dots,v_n)$ is given by $c(v_1,\dots,v_n)$ where $c \colon \typedomain{A_1}\times \dots\times\typedomain{A_n}\rightarrow \Rplus$.
The idea then is to choose annotations $Q_1,\dots,Q_n,Q$ of $A_1,\dots,A_n$ and $B$ in such a way that the amortised cost of $f$ becomes zero or constant, i.e.\
\begin{equation*}
\phi_{Q_1}(v_1) + \dots + \phi_{Q_n}(v_n) \geqslant  c(v_1,\dots,v_n) + \phi_{Q}(f(v_1,\dots,v_n)) + d
\tkom
\end{equation*}
where $d \in Rplus$.
The potential of the input suffices to pay for the cost of
computing $f(v_1,\dots,v_n)$ as well as the potential of the result.
This allows one to compose such judgements in a syntax-oriented way
without having to estimate sizes, let alone the precise form of
intermediate results, which is often needed in competing approaches.

If we introduce product types, we can regard functions with several arguments as unary functions:
Let $A_1\dots,A_n$ be types, then so is $A_1 \times\dots\times A_n$. We conclusively define
$\typedomain{A_1 \times\dots\times A_n} \defsym \typedomain{A_1} \times\dots\times \typedomain{A_n}$ and 
$\BF(A_1 \times\dots\times A_n) = \BF(A_1) \dunion \cdots \dunion \BF(A_n)$,
where $\dunion$ stands for disjoint union.
Furthermore, we define $\langle \mathrm{in}_i(p), (v_1,\dots,v_n)\rangle \defsym \langle p,v_i\rangle$. 
If we now regard $f$ above as a unary function from $A_1
\times\dots\times A_n$ to $B$ then it is not hard to see that the
notions of annotation and amortised cost agree with the multi-ary ones
given above.

This approach has been key to lift earlier results on automated resource analysis, 
e.g.~\cite{HofmannJ03}, restricted to \emph{linear} bounds to \emph{polynomial} bounds. 
In particular, in~\cite{HoffmannH10a} an
automated amortised resource analysis has been introduced exploiting these idea. This analysis
employs binomial coefficients as basic potential functions. The approach generalises to 
general inductive data types, cf.~\cite{HM:2014,HM:2015,MS:2018a}.

\paragraph*{Multivariate Analysis}

In the multivariate version of automated amortised analysis \cite{HAH11,HAH:2012b,HM:2015} one takes 
a more general approach to products. Namely, one then puts
\begin{align*}
\BF(A_1 \times\dots\times A_n) & \defsym \BF(A_1) \times\dots\times \BF(A_n)\\
\langle(p_1,\dots,p_n),(v_1,\dots,v_n)\rangle & \defsym \langle p_1,v_1\rangle\cdot \dots \cdot \langle p_n,v_n\rangle
\tkom
\end{align*}
i.e.\ the basic potential function for a product type is obtained as the multiplication of the
basic potential functions of its constituents. In order to achieve backwards compatibility, that is,
to recover all the potential functions available in the univariate case, it is 
necessary to postulate for each type $A$ a distinguished element $1\in \BF(A)$ 
with $\langle 1,a\rangle =1$ for all $a\in\typedomain{A}$. 

Consider automatisation of the univariate or multivariate analysis. Suppose that it is possible to derive amortised
costs for basic functions like constructors, if-then-else etc. Then one
sets up annotations with indeterminate coefficients and solves
for them so as to automatically infer costs. This is in particular
possible when the basic potential functions for datatypes like lists
or trees are polynomial functions of length and other size
parameters. One of the reasons why this works so well is that if
$p(n)$ is a polynomial, so is $p(n+1)$ and in fact can be expressed as
a linear combination of basic polynomials like, e.g., powers of $x$ or
binomial coefficients, cf.~\cite{HAH11,HAH:2012b}. This approach also generalises
to general inductive data types, cf.~\cite{HM:2014,HM:2015,MS:2018a}.

In contrast to the univariate system, the multivariate system provides for greater 
accuracy because it can derive bounds like $mn$ which in the univariate analysis would be
over-approximated by $m^2+n^2$. This, however, requires a more careful
management of variables and contexts resulting in rather involved
typing rules for composition (\lstinline{let}) and sharing, where \emph{sharing} refers
to the multiple use of variables. 

Since we need a similar mechanism in the present system
we will explain this in a little more detail.
Let $f \colon A\rightarrow B$ and $g \colon B\times C\rightarrow D$ be functions and suppose that evaluating
$f(x)$ and $g(y,z)$ incurs costs $c(x)$ and $d(y,z)$, respectively. 
Suppose further the following constraints hold for all
$x\in\typedomain{A}$, $y\in\typedomain{B}$, $z\in\typedomain{D}$ and
potential functions $\phi_i$, $\phi_i'$, $\phi_i''$, $\psi$:
\begin{align}
\phi_0(x) &\geqslant c(x) + \phi'_0(f(x)) 
\label{AARA:I}
  \\
\phi_i(x) &\geqslant \phi'_i(f(x)) \quad \text{for all $i$ ($0 < i \leqslant n$)} 
\label{AARA:II}
  \\
\phi'_0(y)+\sum_{i=1}^n \phi'_i(y)\phi_i''(z) &\geqslant d(y,z) + \psi(g(y,z))
\label{AARA:III}
\tpkt
\end{align}
Then we conclude for all $x,y,z$: 
$\phi_0(x)+\sum_{i=1}^n \phi_i(x)\phi_i''(z) \geqslant c(x)+d(g(f(x),y)) + \psi(g(f(x),y))$
guaranteeing that a suitable combination of the potential of the arguments, suffices to pay for the
cost $c(x)$ of computing $f(x)$, the cost $d(g(f(x),y))$ of the function composition $g(f(x),y)$,
as well as for the potential $\psi(g(f(x),y))$ of the result $g(f(x),y)$.
Here we multiply~\eqref{AARA:II} with $\phi''_i(z)$ for $i=1\dots n$.

We emphasise that this requires the possibility of deriving inequalities like~\eqref{AARA:II},
which only involve potentials, but no actual costs. That is, in the multivariate
case we crucially employ a \emph{cost-free semantics} to handle composition of
functions. In a cost-free semantics the whose evaluation does not emit any costs.

\paragraph*{Logarithmic Amortised Costs} 

We can now explain at this high level the main ingredients of the proposed amortised resource
analysis for logarithmic amortised costs, which also provides some intuition for the type system
established in Section~\ref{Typesystem}. Among other potential functions which we introduce later,
we use (linear combinations of) functions of the form
\begin{equation*}
p_{a_1,\dots,a_n,b}(x_1,\dots,x_n)=\log(a_1x_1+\dots+a_nx_n+b)
\tkom  
\end{equation*}
where $a_1,\dots a_n,b \in \N$. We then have
$p_{a_1,\dots,a_n,b}(x_1+1,x_2,\dots,x_n) = p_{a_1,\dots,a_n,b+a_1}(x_1,\dots,x_n)$,
which constitutes the counterpart of the fact that shifts of
polynomials are themselves polynomials. Similarly, 
$p_{a_0,a_1,\dots,a_n,b}(x_1,x_1,x_2,\dots,x_n) = p_{a_0+a_1,\dots,a_n,b}(x_1,\dots,x_n)$,
which forms the basis of sharing. For composition we use the following
reasoning. Suppose that
\begin{align}
\phi_0(x) &\geqslant c(x) + \phi'_0(f(x)) 
\label{LARA:I}
\\
\log(a_i\size{x}+b_i) &\geqslant \log(a'_i\size{f(x)}+b'_i) \quad \text{for all $i$ ($0 < i \leqslant n$)}
\label{LARA:II}
\\
\phi'_0(y)+\sum_{i=1}^n \log(a'_i\size{y}+a''_i \size{z} +b'_i) &\geqslant d(y,z) + \psi(g(y,z))
\tkom
\label{LARA:III}
\end{align}
where $\size{\cdot}$ are arbitrary
nonnegative functions, and the potential functions $\phi,\psi$ are as before. 
Then we can conclude, arguing similarly as in the multivariate case, that the
following inequality holds:
\begin{equation*}
  \phi_0(x)+\sum_{i=1}^n \log(a_i\size{x}+a''_i \size{z}+b_i) \geqslant c(x)+d(f(x),y) + \psi(g(f(x),y))
\tpkt
\end{equation*}
Here we crucially use strict monotonicity of the logarithm function, in particular the fact that
$\log(u) \geqslant \log(v)$ implies $\log(u+w)\geqslant\log(v+w)$ for $u,v,w\geqslant 1$, cf.~Lemma~\ref{l:3}. Again the
potential of the arguments suffices to pay for the cost of computing $c(x)$, $d(f(x),y)$, respectively
and covers in addition the potential of the result. 

We emphasise the crucial use of \emph{cost-free semantics} for the correct analysis of function
composition, as witnessed by constraint~\eqref{LARA:II}.

\section{Resource Functions}
\label{ResourceFunctions}

In this section, we detail the basic potential functions employed and clarify the definition
of potentials used.

Only trees are assigned non-zero potential. This is not a severe restriction as potentials for basic datatypes
would only become essential, if the construction of such types would emit actual costs. This is not the case
in our context. Moreover, note that list can be conceived as trees of particular shape. 
The potential $\Phi(t)$ of a tree $t$ is given as a non-negative linear combination of basic functions,
which essentially amount to ``sums of logs'', cf.~Schoenmakers~\cite{Schoenmakers93}.
It suffices to specify the basic functions for the type of trees $T$.
More precisely, the \emph{rank} $\rk(t)$ of a tree is defined as follows:
\begin{align*}
  \rk(\nil) &\defsym 0\\
  \rk(\tree{t}{a}{u}) &\defsym \rk(t) + \log'(\size{t}) + \log'(\size{u}) + \rk(u)                        
\tpkt
\end{align*}
%
Here $\log'(n) \defsym \log_2(\max\{n,1\})$, such that the (binary) 
logarithm function is defined for all numbers. 
This is merely a technicality, introduced to ease the presentation.
Furthermore, recall that $\size{t}$ denotes the number of leaves in tree $t$. 
In the following, we will denote the modified logarithmic 
function, simply as $\log$.
\label{d:log'}
The definition of ``rank'' is inspired by the definition of potential in~\cite{Schoenmakers93,Nipkow:2015},
but subtly changed to suit it to our context.

\begin{definition}
The \emph{basic potential functions} of $T$ are either
\begin{itemize}
\item $\lambda t. \rk(t)$, or
\item $p_{(a,b)} \defsym \log(a \cdot \size{t} +b )$, where $a,b$ are numbers.
\end{itemize}
\end{definition}

The basic functions are denoted as~$\BF$. Note that the
constant function $1$ is representable: $1 = \log(0 \cdot \size{t}+2)$.


Following the recipe of the high-level description in Section~\ref{Primer}, potentials
or more generally \emph{resource functions} become definable as linear combination of basic potential functions.

\begin{definition}
A \emph{resource function} $r \colon \typedomain{T} \to \Rplus$ is a non-negative
linear combination of basic potential functions, that is,
\begin{equation*}
  r(t) \defsym \sum_{i \in \N} q_{i} \cdot p_i (t) \tkom
\end{equation*}
where $p_i \in \BF$. The set of resource functions is denoted as $\RF$.
\end{definition}

We employ $\ast$, natural numbers $i$ and pairs of natural numbers $(a,b)_{a,b \in \N}$ as
indices of the employed basic potential functions. 
A \emph{resource annotation over $T$}, or simply \emph{annotation}, is a 
sequence $Q = [q_\ast] \cup [(q_{(a,b)})_{a,b \in \N}]$ with $q_\ast, q_{(a,b)} \in \Qplus$ with all but
finitely many of the coefficients $q_\ast, q_{(a,b)}$ equal to 0.  It represents
a (finite) linear combination of basic potential functions, that is, a resource function. 
The empty annotation, that is, the annotation 
where all coefficient are set to zero, is denoted as $\varnothing$.

\begin{remark}
We use the convention that the sequence elements of resource annotations 
are denoted by the lower-case letter of the annotation, 
potentially with corresponding sub- or superscripts.
\end{remark}

\begin{definition}
The \emph{potential} of a tree $t$ with respect to an annotation $Q$,
that is, $Q=[q_\ast] \cup [(q_{(a,b)})_{a,b \in \N}]$, is defined as follows.
\begin{equation*}
  \potential{t}{Q} \defsym q_\ast \cdot \rk(t) + \sum_{a,b \in \N} q_{(a,b)} \cdot p_{(a,b)}(t) \tkom
\end{equation*}
Recall that $p_{(a,b)} = \log(a \cdot \size{t} +b)$ and that $\rk$ is the rank
function, defined above.
\end{definition}

\begin{example}
Let $t$ be a tree, then it's potential could be defined as follows: 
$\rk(t) + 3 \cdot \log(\size{t}) + 1$. With respect to the above definition this potential
becomes representable by setting $q \defsym 1, q_{1,0} \defsym 3, q_{0,2} \defsym 1$.
Conclusively $\potential{t}{Q} = \rk(t) + 3 \cdot \log(\size{t}) + 1$. 
\qed
\end{example}

We emphasise that the linear combination defined above is not independent. Consider, for
example $\log(2\size{t}+2)=\log(\size{t}+1)+1$. 

The potential of a sequence of trees $t_1,\dots,t_m$ is defined
as the linear combination of $\rk(t_i)$ and a straightforward extension of the basic
potential functions $p_{a,b}$ to $m$ arguments, denoted as $p_{(\seq[m]{a},b)}$. Here
$p_{(\seq[m]{a},b)}(t_1,\dots,t_m)$ is defined as the logarithmic function
$\log(a_1 \cdot \size{t_1} + \dots + a_m \cdot \size{t_m} + b)$, where
$a_1,\dots,a_m,b \in \N$. 

More precisely, we first generalise annotations to sequences of trees.  
An annotation for a sequence of length $m$ is a sequence $Q = [q_1,\dots,q_m] \cup
[(q_{(\seq[m]{a},b)})_{a_i \in \N}]$, again vanishing almost everywhere. 
Note that an annotation of length $1$ is simply an annotation, where the coefficient $q_1$ is set
equal to the coefficient $q_\ast$.
Based on this, the potential of $t_1,\dots,t_m$ is defined as follows.

\begin{definition}
\label{d:potential}
Let $t_1,\dots,t_m$ be trees and let 
$Q = [q_1,\dots,q_m] \cup [(q_{(\seq[m]{a},b)})_{a_i \in \N}]$ 
be an annotation of length $n$ as above. We define
\begin{equation*}
  \potential{t_1,\dots,t_m}{Q} \defsym \sum_{i=1}^m q_i \cdot \rk(t_i) + \sum_{\seq[m]{a},b \in \N} 
q_{(\seq[m]{a},b)} \cdot p_{(\seq[m]{a},b)}(t_1,\dots,t_m)
  \tkom
\end{equation*}
where $p_{(\seq[m]{a},b)}(t_1,\dots,t_m) \defsym \log(a_1 \cdot \size{t_1} + \cdots + a_m \cdot \size{t_m} + b)$.
Note that $\potential{\varnothing}{Q} = \sum_{b \in \N} q_b \log(b)$.  
\end{definition}

Let $t$ be a tree. Note that the rank function $\rk(t)$ amounts to the sum of the logarithms of the
size of subtrees of $t$. In particular if the tree $t$ simplifies to a list of length $n$, then $\rk(t) = \sum_{i=1}^n \log(i)$.
Moreover, as $\sum_{i=1}^n \log(i) \in \Theta(n \log n)$, the above defined potential functions are sufficiently rich
to express linear combinations of sub- and super-linear functions. 
For practical purposes it may be necessary to expand the class of potential functions further. Here, we emphasise that 
it is not difficult to see the basic potential functions $p_{\seq[m]{a},b}$ can be generalised as
to also incorporate \emph{linear} dependencies on the size of arguments; this does not invalidate
any of the results in this section.  

Let $\sigma$ denote a substitution, let $\Gamma$ denote a typing context 
and let $\typed{x_1}{T},\dots,\typed{x_m}{T}$ denote
all tree types in $\Gamma$. A \emph{resource annotation for $\Gamma$} or simply \emph{annotation}
is an annotation for the sequence of trees $x_1\sigma, \dots, x_m\sigma$.  
We define the \emph{potential} of $\potential{\Gamma}{Q}$ with respect to $\sigma$ as
$\spotential{\Gamma}{Q} \defsym \potential{x_1\sigma,\dots,x_m\sigma}{Q}$.

\begin{definition}
An \emph{annotated signature} $\overline{\FS}$
is a mapping from functions $f$ to sets of pairs consisting of the annotation type
for the arguments of $f$ $\annotatedtype{A_1 \times \cdots \times A_n}{Q}$
and the annotation type $\annotatedtype{A'}{Q'}$ for the result:
\begin{equation*}
  \overline{\FS}(f) \defsym \left\{ \atypdcl{A_1 \times \cdots \times A_n}{Q}{A'}{Q'} \colon \text{\parbox{50ex}{if $f$ takes $m$ trees as arguments, $Q$ is an annotation of length $m$ and $Q'$ a resource annotation}} \right\} \tpkt
\end{equation*}
Note that $m \leqslant n$ by definition.
\end{definition}

We confuse the signature and the annotated signature and denote
the latter simply as~$\FS$. Instead of $\atypdcl{A_1 \times \cdots \times A_n}{Q}{A'}{Q'} \in \FS(f)$, we
typically write $\typed{f}{\atypdcl{A_1 \times \cdots \times A_n}{Q}{A'}{Q'}}$.
%
As our analysis makes use of a \emph{cost-free semantics} any function symbol is possibly equipped with a \emph{cost-free} signature, independent of~$\FS$. The cost-free signature is denoted as $\FScf$.

\begin{example}
Consider the function \lstinline{splay}: $B \times T \to T$. The induced
annotated signature is given as $\atypdcl{B \times T}{Q}{T}{Q'}$,
where $Q \defsym [q_\ast] \cup [(q_{(a,b)})_{a,b \in \N}]$ and 
$Q' \defsym [q'_\ast] \cup [(q'_{(a,b)})_{a,b \in \N}]$.
The logarithmic amortised cost of splaying, is then
suitably expressed through the following setting:
$q_\ast \defsym 1$, $q_{(1,0)} = 3$, $q_{(0,2)} = 1$, 
$q'_\ast \defsym 1$.
All other coefficients are zero.

This amounts to a potential
of the arguments $\rk(t) + 3 \log(\size{t}) + 1$, while for the
result we consider only its rank.
The correctness of the induced logarithmic amortised costs for splaying is
verified in Section~\ref{Analysis}. 
\qed
\end{example}

Suppose $\potential{t_1,\dots,t_n,u_1,u_2}{Q}$ denotes an annotated sequence of length $n+2$. 
Suppose $u_1 = u_2$ and we want to \emph{share} the values $u_i$, that is, the corresponding function arguments
appear multiple times in the body of the function definition. Then we make use of
the operator $\share{Q}$ that adapts the potential suitably. The operator
is also called \emph{sharing operator}.

\begin{lemma}
\label{l:2}
Let $t_1,\dots,t_n,t,u_1,u_2$ denote a sequence of trees of length $n+2$ with
annotation $Q$. Then there exists a resource annotation $\share{Q}$ such that
$\potential{t_1,\dots,t_n,u_1,u_2}{Q} = \potential{t_1,\dots,t_n,u}{\share{Q}}$, if
$u_1 = u_2 = u$.
\end{lemma}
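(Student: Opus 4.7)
The plan is to define $\share{Q}$ explicitly by merging the data attached to the two copies of $u$, and then verify that the two potential expressions coincide term-by-term once one sets $u_1=u_2=u$. Since $\potential{\cdot}{Q}$ is a finite linear combination of a rank-part and logarithmic terms $p_{(\seq[n+2]{a},b)}$, it is enough to show that each of these two parts can be collapsed into an annotation over the shorter sequence $t_1,\dots,t_n,u$.

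Concretely, write $Q = [q_1,\dots,q_n,q_{n+1},q_{n+2}] \cup [(q_{(a_1,\dots,a_n,a_{n+1},a_{n+2},b)})]$. First I would put, for the rank coefficients,
\begin{equation*}
  (\share{Q})_i \defsym q_i \quad (1 \le i \le n), \qquad (\share{Q})_{n+1} \defsym q_{n+1} + q_{n+2}.
\end{equation*}
For the logarithmic coefficients, for each tuple $(a_1,\dots,a_n,a',b)\in \N^{n+2}$ I would set
\begin{equation*}
  (\share{Q})_{(a_1,\dots,a_n,a',b)} \defsym \sum_{\substack{c_1,c_2 \in \N \\ c_1+c_2=a'}} q_{(a_1,\dots,a_n,c_1,c_2,b)}.
\end{equation*}
The inner sum is finite, and only finitely many of the coefficients are nonzero because $Q$ has finite support; hence $\share{Q}$ is a well-defined annotation in the sense of Definition~\ref{d:potential}.

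Next I would check the claimed equality. For the rank part, $q_{n+1}\rk(u_1) + q_{n+2}\rk(u_2) = (q_{n+1}+q_{n+2})\rk(u)$ since $u_1=u_2=u$, which matches the contribution from $(\share{Q})_{n+1}$. For the logarithmic part, substituting $u_1=u_2=u$ gives
\begin{equation*}
  p_{(a_1,\dots,a_n,c_1,c_2,b)}(t_1,\dots,t_n,u,u) = \log\!\Bigl(\textstyle\sum_{i=1}^n a_i \size{t_i} + (c_1+c_2)\size{u} + b\Bigr),
\end{equation*}
so this only depends on $a' \defsym c_1+c_2$. Grouping the terms of $\potential{t_1,\dots,t_n,u_1,u_2}{Q}$ according to the value of $a'$ then produces exactly $\sum_{(a_1,\dots,a_n,a',b)} (\share{Q})_{(a_1,\dots,a_n,a',b)} \, p_{(a_1,\dots,a_n,a',b)}(t_1,\dots,t_n,u)$, which is $\potential{t_1,\dots,t_n,u}{\share{Q}}$.

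There is no real obstacle here beyond bookkeeping: the sharing operator is just the pushforward of $Q$ along the identification $u_1 \equiv u_2$, and correctness follows from the linearity of the argument of $\log$ in the size parameters together with the fact that finite support is preserved under the grouping. The only mild subtlety is to make sure that the double sum over pairs $(c_1,c_2)$ with $c_1+c_2=a'$ is finite; but this is immediate from the finite support of $Q$.
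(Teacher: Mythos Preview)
Your proposal is correct and follows essentially the same approach as the paper: define $\share{Q}$ by summing the two rank coefficients and grouping the logarithmic coefficients according to $c_1+c_2$, then verify equality term-by-term. The only cosmetic difference is that the paper assumes $n=0$ without loss of generality to lighten the notation, whereas you carry the full prefix $t_1,\dots,t_n$ throughout; your version is slightly more explicit (in particular about finite support), but the underlying argument is identical.
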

\begin{proof}
Wlog.\ we assume $n=0$. Thus, let $Q = \{q_1,q_2\} \cup \{(q_{(a_1,a_2,b)})_{a_i \in \N}\}$. 
By definition 
\begin{equation*}
  \potential{u_1,u_2}{Q} = q_1 \cdot \rk(u_1) + q_2 \cdot \rk(u_2) + 
  \sum_{a_1,a_2,b \in \N}  q_{(a_1,a_2,b)} \cdot p_{(a_1,a_2,b)}(u_1,u_2) 
  \tkom
\end{equation*}
where $p_{(a_1,a_2,b)}(u_1,u_2) = \log(a_1 \cdot \size{u_1} + a_2 \cdot \size{u_2} + b)$.
By assumption $u=u_1=u_2$. Thus, we obtain
\begin{align*}
  \potential{u,u}{Q} & = q_1 \cdot \rk(u) + q_2 \cdot \rk(u) + \sum_{a_1,a_2,b \in \N}  q_{(a_1,a_2,b)} \cdot p_{(a_1,a_2,b)}(u,u) 
  \\
  & = (q_1+q_2) \rk(u) + \sum_{a_1+a_2,b \in \N} q_{(a_1+a_2,b)} \cdot p_{(a_1+a_2,b)}(u)
  \\
  &  = \potential{u}{\share{Q}}
    \tkom
\end{align*}
for suitable defined annotation $\share{Q}$.
\end{proof}

We emphasise that the definability of the sharing annotation $\share{Q}$ is based on the fact that
the basic potential functions $p_{\seq[m]{a},b}$ have been carefully chosen so that
\begin{equation*}
  p_{a_0,a_1,\dots,a_m,b}(x_1,x_1,\dots,x_m) = p_{a_0+a_1,\dots,a_m,b}(x_1,x_1,\dots,x_m)
  \tkom
\end{equation*}
holds, cf.~Section~\ref{Primer}.

\begin{remark}
  We observe that the proof-theoretic analogue of the sharing operation constitutes in a
  contraction rule, if the type system is conceived as a proof system.   
\end{remark}

Let $Q = [q_\ast] \cup [(q_{(a,b)})_{a,b \in \N}]$ be an annotation and let 
$K \in \Qplus$. Then we define $Q' \defsym Q + K$ as follows: 
$Q' = [q_\ast] \cup [(q'_{(a,b)})_{a,b \in \N}]$,
where $q'_{(0,2)} \defsym q_{(0,2)} + K$ and for all $(a,b) \not= (0,2)$
$q_{(a,b)}' \defsym q_{(a,b)}$. By definition the annotation coefficient $q_{(0,2)}$ is
the coefficient of the basic potential function $p_{(0,2)}(t) = \log(0\size{t} + 2) = 1$, so
the annotation $Q+K$, adds cost $K$ to the potential induced by $Q$.

Due to the involved form of the basic function underlying the definition of
potential, cf.~Definition~\ref{d:potential}, we cannot simply define \emph{weakening}
of potentials through the (pointwise) comparison of annotations. This is in 
contrast to results on resource analysis for constant amortised costs. 
Instead we compare potentials \emph{symbolically} by fixing the shape of the
considered logarithmic functions and perform coefficient comparisons, akin to similar techniques
in the synthesis of \emph{polynomial interpretations}~\cite{contejean:2005}. In addition
we use basic laws of the $\log$ functions as well as properties of the size function.

Let $\Gamma$ denote a type context containing the type declarations $\typed{x_1}{T}, \dots, \typed{x_m}{T}$ 
and let $Q$ be an annotation of length $m$. The the \emph{symbolic potential}, denoted
as $\potential{\Gamma}{Q}$, is defined as:
\begin{equation*}
\potential{x_1,\dots,x_m}{Q} \defsym \sum_{i=1}^m q_i \cdot \rk(x_i) + \sum_{\seq[m]{a},b \in \N} 
q_{(\seq[m]{a},b)} \cdot p_{(\seq[m]{a},b)}(x_1,\dots,x_m)
\tkom
\end{equation*}
where $p_{(\seq[m]{a},b)}(x_1,\dots,x_m) = \log(a_1 \cdot \size{x_1} + \cdots + a_m \cdot \size{x_m} + b)$.

In order to automate the verification of the constraint $\potential{\Gamma}{Q} \geqslant \potential{\Gamma}{Q'}$,
we can rely on a suitably defined heuristics based on the following simplification steps:
%
\begin{enumerate}
\item  simplifications, like e.g.~$\rk(\tree{u}{b}{v}) = \rk(u)+\rk(v)+\log(\size{u})+log(\size{v})$;
\item monotonicity of $\log$;
\item simply estimations of the logarithm functions like the next lemma; and
\item properties of the size function.
\end{enumerate}  
\label{d:heuristics}

\begin{lemma}
\label{l:1}
Let $x, y \geqslant 1$. Then $2 + \log(x) + \log(y) \leqslant 2\log(x+y)$. 
\end{lemma}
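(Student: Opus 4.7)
The plan is to reduce the claim to the elementary AM-GM inequality $(x-y)^2 \geq 0$. Since $x, y \geq 1$, the modified logarithm $\log$ (which equals $\log_2$ on inputs $\geq 1$) is well-defined and strictly monotone, so it suffices to exponentiate both sides and reason purely algebraically.

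First I would rewrite the desired inequality $2 + \log(x) + \log(y) \leq 2\log(x+y)$ using the fact that $2 = \log(4)$ and that $\log(a) + \log(b) = \log(ab)$ for $a, b \geq 1$. This reduces the goal to $\log(4xy) \leq \log((x+y)^2)$. By monotonicity of $\log_2$ on positive reals, this is equivalent to $4xy \leq (x+y)^2$.

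Next I would observe that $(x+y)^2 - 4xy = (x-y)^2 \geq 0$, which is immediate. Chaining these equivalences in reverse yields the claim. The hypothesis $x, y \geq 1$ is only needed to ensure that $\log = \log'$ behaves as the ordinary binary logarithm (so that the laws $\log(ab) = \log(a) + \log(b)$ and monotonicity via $\log$ rather than $\log'$ apply without a case distinction on whether the arguments drop below $1$).

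There is no real obstacle here; the only subtlety is a bookkeeping one, namely making sure the step $\log(x) + \log(y) = \log(xy)$ is justified by the definition of $\log'$ on page \pageref{d:log'}. Since $x, y \geq 1$ implies $xy \geq 1$ and $x + y \geq 2 \geq 1$, all the $\max\{\cdot, 1\}$ clippings are inactive and the standard logarithm identities apply verbatim.
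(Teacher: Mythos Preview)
Your proof is correct and essentially identical to the paper's: both reduce the inequality to $4xy \leq (x+y)^2$ via the logarithm laws and then verify this by $(x+y)^2 - 4xy = (x-y)^2 \geq 0$. The only difference is presentational order (you reduce first and then verify, the paper verifies first and then lifts), and your explicit remark about why the $\max\{\cdot,1\}$ clipping in $\log'$ is inactive is a welcome addition.
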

\begin{proof}
We observe
  \begin{equation*}
    (x+y)^2 -4xy = (x-y)^2 \geqslant 0
    \tpkt
  \end{equation*}
Hence $(x+y)^2 \geqslant 4xy$ and from the monotonicity
of $\log$ we conclude $\log(xy) \leqslant \log(\frac{(x+y)^2}{4})$. 
By elementary laws of $\log$ we obtain:
\begin{equation*}
  \log(\frac{(x+y)^2}{4}) = \log\left( (\frac{x+y}{2})^2 \right) =
  2 \log(x+y) - 2
  \tkom
\end{equation*}
from which the lemma follows as $\log(xy) = \log(x)+\log(y)$.
\end{proof}

We leave the simple proof to the reader. A variant of this fact has already been observed by 
Okasaki, cf.~\cite{Okasaki:1999}. 
The above heuristic is automatable, employing off-the-shelf SMT solvers, such that the required simplification rules
are incorporated as (user-defined) axioms. However, this is not very efficient. In Section~\ref{Implementation} we sketch an alternative path towards automation.

\section{Logarithmic Amortised Resource Analysis}
\label{Typesystem}

In this section, we present the central contribution of this work. We delineate
a novel type system incorporating a potential-based amortised resource analysis
capable of expressing \emph{logarithmic} amortised costs. Soundness of the approach
is established in Theorem~\ref{t:1}.

The next auxiliary lemma is a direct consequence of the
strict monotonicity of $\log$. Note, that the assumption that
$a,b,c$ are strictly greater than is zero is necessary, even in the
light of our use of a ``modified'' logarithm function, see~page~\pageref{d:log'}.

\begin{lemma}
\label{l:3}
Let $u, v, w \geqslant 1$. If 
$\log(u) \leqslant \log(v)$, then
$\log(u+w) \leqslant \log(v+w)$.
\end{lemma}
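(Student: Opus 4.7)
The plan is to reduce this to pure monotonicity of the standard binary logarithm on $[1,\infty)$. Recall (see page~\pageref{d:log'}) that the $\log$ used here is the modified function $\log'(n) = \log_2(\max\{n,1\})$, so it could be constant on $[0,1]$, which is why the hypothesis $u,v,w \geqslant 1$ is essential: on $[1,\infty)$ the modified logarithm coincides with the ordinary $\log_2$ and is strictly monotone.

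Concretely, the argument proceeds in three short steps. First, from the hypothesis $\log(u) \leqslant \log(v)$ together with $u,v \geqslant 1$, I would invoke strict monotonicity of $\log_2$ on $[1,\infty)$ to conclude that $u \leqslant v$ as real numbers. Second, adding $w$ to both sides of this inequality gives $u+w \leqslant v+w$. Third, since $w \geqslant 1$, both $u+w$ and $v+w$ are bounded below by $2$, hence certainly by $1$, so the modified logarithm again agrees with $\log_2$ on these arguments, and its (weak) monotonicity yields $\log(u+w) \leqslant \log(v+w)$, as required.

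There is essentially no obstacle here; the only point one must be careful about is not to apply monotonicity of $\log'$ in a region where it is constant, which is precisely what the lower bounds $u,v,w \geqslant 1$ exclude. If desired, one could even omit the first step and argue directly: the inequality $\log(u) \leqslant \log(v)$ is equivalent to $u \leqslant v$ under the standing assumption $u,v \geqslant 1$, so the statement reduces to the trivial fact that addition is monotone and $\log$ is monotone on $[1,\infty)$.
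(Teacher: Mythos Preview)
Your proposal is correct and follows exactly the approach the paper indicates: the paper does not give a detailed proof but merely states that the lemma ``is a direct consequence of the strict monotonicity of $\log$,'' and your three-step argument is precisely that consequence spelled out, including the necessary care about the modified logarithm on $[0,1]$.
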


From the lemma we conclude for coefficients $q_i$ and positive rational number $\seq{q}$, $b$ and $c$, that we have:
\begin{equation*}
  \sum_{i} q_i \cdot \log(a_i) \geqslant \log(b) \ \text{implies} \
  \sum_{i} q_i \cdot \log(a_i + c) \geqslant \log(b+c)
  \tpkt
\end{equation*}
The above inequality is employed in the correct assessment of the transfer of potential in the case of
function composition, see Figure~\ref{fig:5} as well as the high-level description provided in Section~\ref{Primer}.

Our potential-based amortised resource analysis is coached in a type system, 
which is given in Figure~\ref{fig:5}. If the type judgement
$\tjudge{\Gamma}{Q}{e}{A}{Q'}$ is derivable, then the cost of execution of the
expression $e$ is bound from above by the difference between the potential $\spotential{\Gamma}{Q}$
before the execution and the potential $\potential{v}{Q'}$ of the value $v$ obtained through the evaluation of
the expression $e$. The typing system makes use of a \emph{cost-free}
semantics, which does not attribute any costs to the calculation.
I.e.\ the $(\app)$ is changed as no cost is emitted. The cost-free typing judgement
is denoted as $\tjudgecf{\Gamma}{Q}{e}{A}{Q'}$. 

\begin{remark}
Principally the type system can be parametrised
in the resource metric (see~eg.\cite{HAH:2012b}). However, we focus on worst-case runtime complexity, 
symbolically measured through the number of rule applications. 
\end{remark}

We consider the typing rules in turn; recall the convention that sequence elements of annotations are denoted by the lower-case letter of the annotation.
The variable rule $(\var)$ types a variable of unspecified
type $A$. As no actual costs are required the annotation is unchanged.
Similarly no resources are lost through the use of control operators. Conclusively, the definition
of the rules $(\cmp)$ and $(\ite)$ is straightforward.

As exemplary constructor rules, we have rule $(\leafrule)$ for the
empty tree and rule $(\noderule)$ for the node constructor. Both rules
define suitable constraints on the resource annotations to guarantee that the potential of the
values is correctly represented.

The application rule $(\app)$ represents the application of a rule 
in~$\Program$; the required annotations for the typing context and the result can be directly
read off from the annotated signature. Each application emits actual cost $1$, which
is indicated in the addition of $1$ to the annotation $Q$. 

In the pattern matching rule $(\match)$ the potential freed through the
destruction of the tree construction is added to the annotation $R$, which
is used in the right premise of the rule. Note that the length of
the annotation $R$ is $m+2$, where $m$ equals the number of tree types in the type context $\Gamma$.

The constraints expressed in the typing rule $\letrule$, 
guarantee that the potential provided
through annotation $Q$ is distributed among the call to $e_1$ and $e_2$. This typing rule
takes care of function composition. Due to the sharing rule, we can assume
wlog.\ that each variable in $e_1$ and $e_2$ occurs at most once.
The numbers $m$, $k$, respectively, denote the number of tree types in $\Gamma$, $\Delta$.
This rule necessarily employs the \emph{cost-free semantics}. The premise
$\tjudgecf{\Gamma}{P_{\vecb}}{e_1}{A}{P'_{\vecb}}$ ($\vecb \not= \vec{0}$) expresses
that for all non-zero vectors $\vecb$, the potentials $\potential{\Gamma}{P_{\vecb}}$ suffices
to cover the potential $\potential{A}{P'_{\vecb}}$, if not extra costs are emitted. Intuitively this
represents the cost-free constraint \eqref{LARA:II} emphasised in Section~\ref{Primer}.
 
Finally, the type system makes use of structural rules, like the \emph{sharing rule} 
$(\sharerule)$ and the \emph{weakening rules} $(\weakvar)$ and $(\weak)$. The
sharing rule employs the sharing operator, implicitly defined in Lemma~\ref{l:2}. Note that
the variables $x,y$ introduced in the assumption of the typing rule are fresh variables, that do not occur
in $\Gamma$. The weakening rules embody changes in the potential of the type context of
expressions considered. Weakening employs the symbolic potential expressions, introduced
in Section~\ref{ResourceFunctions}.

\begin{figure}[t]
\begin{center}
\begin{tabular}{c@{}c}
  $\infer[(\leafrule)]{\tjudge{\varnothing}{Q}{\nil}{T}{Q'}}{%
    q_{c} = \sum_{a+b=c} q'_{a,b}
  }$
  &
  $\infer[(\weakvar)]{\tjudge{\Gamma, \typed{x}{A}}{Q}{e}{C}{Q'}}{%
      \tjudge{\Gamma}{R}{e}{C}{Q'}
      &
        r_i = q_i
      &
        r_{\veca,b} = q_{\veca,0,b}
       }$
  \\[3ex]
  \multicolumn{2}{c}{%
  $\infer[(\noderule)]{\tjudge{\typed{x_1}{T},\typed{x_2}{B},\typed{x_3}{T}}{Q}{\tree{x_1}{x_2}{x_3}}{T}{Q'}}{%
    q_1 = q_2 = q'
    &
    q_{(1,0,0)} = q_{(0,1,0)} = q'_\ast
    &
    q_{(a,a,b)} = q'_{(a,b)}
  }$}
  \\[3ex]
  $\infer[(\cmp)]{\tjudge{\typed{x_1}{B}, \typed{x_2}{B}}{Q}{x_1~cmp~x_2}{B}{Q}}{%
     \text{$cmp$ a comparison operator}
  }$
  &
  $\infer[(\ite)]{\tjudge{\Gamma, \typed{x}{Bool}}{Q}{\cif\ x\ \cthen\ e_1\ \celse\ e_2}{A}{Q'}}{%
    \tjudge{\Gamma}{Q}{e_1}{A}{Q'} 
    &
    \tjudge{\Gamma}{Q}{e_2}{A}{Q'}   
    }$
  \\[3ex]
  \multicolumn{2}{c}{%
  $\infer[(\matchrule)]{\tjudge{\Gamma, \typed{x}{T}}{Q}{\match\ x~\with\ 
    \text{\lstinline{|}} \nil\ \arrow e_1
    \text{\lstinline{|}} \tree{x_1}{x_2}{x_3} \arrow e_2}{A}{Q'}}{%
    \begin{minipage}[b]{25ex}
      $r_{(\veca,a,a,b)} = q_{(\veca,a,b)}$\\[1ex]
      $p_{\veca,c} = \sum_{a+b=c} q_{\veca,a,b}$\\[1ex]
      $\tjudge{\Gamma}{P}{e_1}{A}{Q'}$
    \end{minipage}
  &
    \begin{minipage}[b]{40ex}
       $r_{m+1} = r_{m+2} = q_{m+1}$
      \\[1ex]
      $r_{(\vec{0},1,0,0)} = r_{(\vec{0},0,1,0)} = q_{m+1}$
      \\[1ex]
      $\tjudge{\Gamma, \typed{x_1}{T}, \typed{x_2}{B}, \typed{x_3}{T}}{R}{e_2}{A}{Q'}$
    \end{minipage}
  }$}
  \\[3ex]
  \multicolumn{2}{c}{%
  $\infer[(\letrule)]{\tjudge{\Gamma, \Delta}{Q}{\vlet\ x~\equal~e_1\ \vin\ e_2}{C}{Q'}}{%
  \begin{minipage}[b]{30ex}
    $p_i = q_i$ \quad $p_{(\veca,c)} = q_{(\veca, \vec{0}, c)}$\\[1ex]
    $p' = r_{k+1}$ \quad $p'_{(a,c)} = r_{(\vec{0},a,c)}$\\[1ex]
    $\tjudge{\Gamma}{P}{e_1}{A}{P'}$
  \end{minipage}  
  &
  \begin{minipage}[b]{33ex}
   $p^{\vecb}_{(\veca,c)} = q_{(\veca,\vecb,c)}$ \\[1ex]
   ${p'}^{\vecb}_{(a,c)} = r_{(\vecb,a,c)}$\\[1ex]      
   $\tjudgecf{\Gamma}{P_{\vecb}}{e_1}{A}{P'_{\vecb}} \quad (\vecb \not= \vec{0})$
  \end{minipage}
  &
  \begin{minipage}[b]{25ex}
    $r_{(\vecb,0,c)} = q_{(\vec{0},\vecb,c)}$\\[1ex]
    $r_j = q_j$\\[1ex]
    $\tjudge{\Delta, \typed{x}{A}}{R}{e_2}{C}{Q'}$
  \end{minipage}
  }$}
  \\[3ex]
  $\infer[(\sharerule)]{\tjudge{\Gamma, \typed{z}{A}}{\share{Q}}{e[z,z]}{C}{Q'}}{%
    \tjudge{\Gamma, \typed{x}{A}, \typed{y}{A}}{Q}{e[x,y]}{C}{Q'}
  }$
  &
  $\infer[(\weak)]{%
    \tjudge{\Gamma}{Q}{e}{A}{Q'}
    }{%
    \tjudge{\Gamma}{P}{e}{A}{P'}      
    &
    \begin{minipage}[b]{20ex}
      $\potential{\Gamma}{P} \leqslant \potential{\Gamma}{Q}$
      \\[1ex]
      $\potential{\Gamma}{P'} \geqslant \potential{\Gamma}{Q'}$
    \end{minipage}
    }$
  \\[3ex]
  $\infer[(\var)]{\tjudge{\typed{x}{A}}{Q}{x}{A}{Q}}{%
     \text{$x$ a variable}
  }$
  &
  $\infer[(\app)]{%
        \tjudge{\typed{x_1}{A_1},\dots,\typed{x_n}{A_n}}{Q+1}{f(\seq{x})}{A'}{Q'}
      }%
      {%
        \atypdcl{A_1 \times \cdots \times A_n}{Q}{A'}{Q'} \in \FS(f)
    }$
  \end{tabular}  
\end{center}
To ease notation, We set $\veca \defsym \seq[m]{a}$, $\vecb \defsym \seq[k]{b}$, 
$i \in \{1,\dots,m\}$, $j \in \{1,\dots,k\}$ and $a,b,c \in \Qplus$ to simplify notation. 
Sequence elements of annotations, which are not constraint are set to zero.
\caption{Type System for Logarithmic Amortised Resource Analysis}
\label{fig:5}
\end{figure}
 
\begin{definition}
\label{d:welltyped}
A program $\Program$ is called \emph{well-typed} if for any rule
$f(x_1,\dots,x_k) = e \in \Program$ and any annotated signature
$\atypdcl{A_1 \times \cdots \times A_n}{Q}{A'}{Q'} \in \FS(f)$, we have 
$\tjudge{\typed{x_1}{A_1},\dots,\typed{x_k}{A_k}}{Q}{e}{A'}{Q'}$.
A program $\Program$ is called \emph{cost-free} well-typed, if the
cost-free typing relation is employed.
\end{definition}

We obtain the following soundness result.

\begin{theorem}[Soundness Theorem]
\label{t:1}
Let $\Program$ be well-typed and let $\sigma$ be a substitution. Suppose $\tjudge{\Gamma}{Q}{e}{A}{Q'}$ and
$\eval{\sigma}{m}{e}{v}$. Then $\spotential{\Gamma}{Q} - \potential{v}{Q'} \geqslant m$.
\end{theorem}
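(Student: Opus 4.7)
The plan is to proceed by structural induction on the derivation of $\tjudge{\Gamma}{Q}{e}{A}{Q'}$, with case analysis on the last rule, and to establish in parallel a companion claim for cost-free derivations: if $\tjudgecf{\Gamma}{Q}{e}{A}{Q'}$ and $\eval{\sigma}{m}{e}{v}$, then $\spotential{\Gamma}{Q} \geqslant \potential{v}{Q'}$ (without the cost term, since no costs are emitted). This companion is needed only to discharge the cost-free premises of $(\letrule)$, and its proof mirrors the main one case by case, with $(\app)$ and $(\letrule)$ simplifying because there are no unit costs to pay.

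For the axiomatic rules $(\var)$, $(\cmp)$, $(\leafrule)$, and $(\noderule)$ the evaluation cost is $0$ and the goal reduces to a potential identity that is forced by the coefficient constraints on the annotations; for instance the constraint $q_c = \sum_{a+b=c} q'_{a,b}$ in $(\leafrule)$ encodes exactly $\potential{\varnothing}{Q} = \potential{\nil}{Q'}$, and analogously for $(\noderule)$. The rules $(\ite)$, $(\sharerule)$, $(\weak)$, and $(\weakvar)$ are routine: $(\ite)$ invokes the induction hypothesis on the branch selected by the semantics; $(\sharerule)$ applies Lemma~\ref{l:2} to equate the two potentials; the weakening rules chain their assumed inequalities with the induction hypothesis. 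For $(\app)$, well-typedness of $\Program$ supplies a typing derivation of the body of $f$ with the signature's annotations, and the $+1$ offset on the argument side of the rule accounts for the single unit cost emitted by the application rule of the big-step semantics. The $(\matchrule)$ case reduces, in the tree branch, to the identities $\rk(\tree{t}{a}{u}) = \rk(t) + \rk(u) + \log(\size{t}) + \log(\size{u})$ and $\size{\tree{t}{a}{u}} = \size{t}+\size{u}$, which are precisely what the constraints relating $R$ to $Q$ encode.

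The main obstacle is the $(\letrule)$ case. Here the evaluation splits as $\eval{\sigma}{m_1}{e_1}{v'}$ and $\eval{\sigma[x\mapsto v']}{m_2}{e_2}{v}$ with $m=m_1+m_2$, and the goal is $\spotential{\Gamma,\Delta}{Q} - \potential{v}{Q'} \geqslant m_1 + m_2$. The cost-bearing premise $\tjudge{\Gamma}{P}{e_1}{A}{P'}$ together with the induction hypothesis delivers $\spotential{\Gamma}{P} \geqslant m_1 + \potential{v'}{P'}$; the cost-free companion applied to each premise $\tjudgecf{\Gamma}{P_{\vecb}}{e_1}{A}{P'_{\vecb}}$ with $\vecb\ne\vec 0$ delivers $\spotential{\Gamma}{P_{\vecb}} \geqslant \potential{v'}{P'_{\vecb}}$. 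To combine these with the induction hypothesis on $e_2$, we lift each cost-free inequality by adding the $\Delta\sigma$-contribution $\sum_j b_j \size{y_j\sigma}$ inside every logarithm on both sides; this is exactly what the corollary to Lemma~\ref{l:3} stated immediately after its proof licenses. Summing the lifted inequalities over all $\vecb\neq\vec 0$, adding the cost-bearing inequality for $e_1$, and checking by the coefficient constraints that the resulting left-hand side reassembles into $\spotential{\Gamma,\Delta}{Q}$ while the right-hand side reassembles into $m_1$ plus $\spotential{\Delta,\typed{x}{A}}{R}$ evaluated at $\sigma[x\mapsto v']$, an invocation of the induction hypothesis on $e_2$ closes the case.

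The principal technical difficulty lies in the bookkeeping: in $(\letrule)$ and $(\matchrule)$ the indices $\veca$, $\vecb$, $a$, $c$ interact through several layers of constraints, and one must verify that the coefficients line up so that the reassembly in the last step is a literal identity of potentials rather than a mere inequality. The key conceptual ingredient, which distinguishes the logarithmic setting from the polynomial one, is Lemma~\ref{l:3}: it is what makes an inequality between logarithmic potentials survive the addition of the extra argument-sizes contributed by $\Delta$, and therefore what makes cost-free judgments composable in a non-trivial context.
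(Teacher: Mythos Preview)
Your case analysis and the handling of $(\letrule)$ are essentially what the paper does, including the companion cost-free statement and the use of Lemma~\ref{l:3} to lift each cost-free inequality by the $\Delta$-contribution before reassembling into $\spotential{\Delta,\typed{x}{A}}{R}$. There is, however, a genuine gap in the induction you set up.

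You induct structurally on the typing derivation $\Xi$ alone. In the $(\app)$ case, $\Xi$ is a leaf: its only premise is the signature lookup $\atypdcl{A_1\times\cdots\times A_n}{Q}{A'}{Q'}\in\FS(f)$. The typing derivation of the body of $f$ that well-typedness supplies is \emph{not} a subderivation of $\Xi$, so your induction hypothesis does not apply to it, and you have no way to bound the cost of evaluating the body. With recursive $f$ this cannot be patched by unfolding, since the body's typing derivation may again contain an $(\app)$ for $f$.

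The paper fixes this by taking the evaluation derivation $\Pi\colon\eval{\sigma}{m}{e}{v}$ as the \emph{main} induction parameter and the typing derivation $\Xi$ as a \emph{side} induction. In the $(\app)$ case the semantic rule for application has $\eval{\sigma}{m}{e_f}{v}$ (evaluation of the body) as a premise, so $\Pi$ strictly shrinks and the main induction hypothesis applies to the body together with its fresh typing derivation; the side induction on $\Xi$ is what discharges the structural rules $(\weak)$, $(\weakvar)$, $(\sharerule)$, where $\Pi$ is unchanged but $\Xi$ shrinks. Once you reorganise the induction this way, the remainder of your argument goes through verbatim.
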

\begin{proof}
  The proof embodies the high-level description
  given in Section~\ref{Primer}. It proceeds by main induction on $\Pi \colon \eval{\sigma}{m}{e}{v}$
  and by side induction on $\Xi \colon \tjudge{\Gamma}{Q}{e}{A}{Q'}$. 
We consider only a few cases of interest. 
For example, for a case not covered: the variable rule $(\var)$ types a variable of unspecified 
type $A$. As no actual costs a required the annotation is unchanged and the
theorem follows trivially. 

\emph{Case}. $\Pi$ derives $\eval{\sigma}{0}{\nil}{\nil}$. Then $\Xi$ 
consists of a single application of the rule $(\leafrule)$:
\begin{equation*}
\infer[(\leafrule)]{\tjudge{\varnothing}{Q}{\nil}{T}{Q'}}{%
    q_{c} = \sum_{a+b=c} q'_{a,b}
  }  
\tpkt
\end{equation*}
By assumption $Q = [(q_c)_{c \in \N}]$ is an annotation for the empty
sequence of trees. On the other hand $Q' = [(q'_{(a,b)})_{a,b \in \N}]$ is an annotation
of length $1$. 
Thus we obtain:
\begin{equation*}
\potential{\varnothing}{Q} = \sum_{c} q_c \cdot \log(c) =
\sum_{a,b} q'_{(a,b)} \cdot \log(a+b) = 
\rk(\nil) + \sum_{a,b} q'_{(a,b)} p_{(a,b)} = \potential{\nil}{Q'}
\tpkt  
\end{equation*}

\emph{Case}.
Suppose the last rule in $\Pi$ has the following
from:
\begin{equation*}
  \infer{\eval{\sigma}{0}{\tree{x_1}{x_2}{x_3}}{\tree{u}{x}{b}}}{%
      x_1\sigma = u
      &
      x_2\sigma = b
      &
      x_3\sigma = v
  }
\tpkt
\end{equation*}
Wlog.\ $\Xi$ consists of a single application of the rule $\noderule$:
\begin{equation*}
\infer[(\noderule)]{\tjudge{\typed{x_1}{T},\typed{x_2}{B},\typed{x_3}{T}}{Q}{\tree{x_1}{x_2}{x_3}}{T}{Q'}}{%
    q_1 = q_2 = q'_\ast
    &
    q_{(1,0,0)} = q_{(0,1,0)} = q'_\ast
    &
    q_{(a,a,b)} = q'_{(a,b)}
  }  
\end{equation*}
By definition, we have $Q = [q_1, q_2] \cup [(q_{(a_1,a_2,b)})_{a_i,b \in \N}]$ and
$Q' = [q'] \cup [(q'_{(a,b)})_{a',b' \in \N}]$. 
We set $\Gamma \defsym \typed{x_1}{T}, \typed{x_2}{B}, \typed{x_3}{T}$ and 
$\tree{x_1}{x_2}{x_3}\sigma \defsym \tree{u}{b}{v}$. Thus $\spotential{\Gamma}{Q} =
\potential{u,v}{Q}$ and we obtain:
\begin{align*}
\potential{u,v}{Q} & = q_1 \cdot \rk(u) + q_2 \cdot \rk(v) + 
                    \sum_{a_1,a_2,b} q_{(a_1,a_2,b)} \cdot \log(a_1 \cdot \size{u} + a_2 \cdot \size{v} + b)
\\
& \geqslant q'_\ast \cdot \rk(u) + q' \cdot \rk(v) + 
                    q_{(1,0,0)} \cdot \log(\size{u}) + q_{(0,1,0)} \cdot \log(\size{v}) +
\\ 
& \quad  {} + \sum_{a,b} q_{(a,a,b)} \cdot \log(a \cdot \size{u} + a \cdot \size{v} + b)
\\
& = q'_\ast \cdot (\rk(u)+ \rk(v)+ \log(\size{u})+\log(\size{v})) + {}\\
& \quad {} + \sum_{a, b} q'_{(a,b)} \cdot \log(a \cdot (\size{u}+\size{v}) +b) 
\\
& = q'_\ast \cdot \rk(\tree{u}{b}{v}) + \sum_{a, b} q'_{(a,b)} \cdot p_{(a,b)}(\tree{u}{b}{v})
= \potential{\tree{u}{b}{v}}{Q'}
\tpkt
\end{align*}

\emph{Case}.
Consider the $\match$ rule, that is, $\Pi$ ends as follows:
\begin{equation*}
\infer{\eval{\sigma}{m}{\match\ x~\with\
    \text{\lstinline{|}} \nil \arrow e_1
    \text{\lstinline{|}} \tree{x_1}{x_2}{x_3} \arrow e_2
    }{v}}{%
    x\sigma = \tree{t}{a}{u}
    &
    \eval{\sigma'}{m}{e_2}{v}  
  }
\tpkt  
\end{equation*}
Wlog.\ we may assume that $\Xi$ ends with the related application
of the $(\matchrule)$:
\begin{equation*}
\infer[(\matchrule)]{\tjudge{\Gamma, \typed{x}{T}}{Q}{\match\ x~\with\ 
    \text{\lstinline{|}} \nil \arrow e_1
    \text{\lstinline{|}} \tree{x_1}{x_2}{x_3} \arrow e_2}{A}{Q'}}{%
    \begin{minipage}[b]{30ex}
      $r_{(\veca,a,a,b)} = q_{(\veca,a,b)}$\\[1ex]
      $p_{\veca,c} = \sum_{a+b=c} q_{\veca,a,b}$\\[1ex]
      $\tjudge{\Gamma}{P}{e_1}{A}{Q'}$
    \end{minipage}
  &
    \begin{minipage}[b]{37ex}
       $r_{m+1} = r_{m+2} = q_{m+1}$
      \\[1ex]
      $r_{(\vec{0},1,0,0)} = r_{(\vec{0},0,1,0)} = q_{m+1}$
      \\[1ex]
      $\tjudge{\Gamma, \typed{x_1}{T}, \typed{x_2}{B}, \typed{x_3}{T}}{R}{e_2}{A}{Q'}$
    \end{minipage}
  }
\tpkt
\end{equation*}

We assume the annotations $P$, $Q$, $R$, are of length $m$, $m+1$ and $m+3$,
respectively, while $Q'$ is of length $1$. We write $\vect \defsym \seq[m]{t}$ for
the substitution instances of the variables in $\Gamma$ and $t \defsym x\sigma = \tree{u}{b}{v}$,
where the latter equality follows from the assumption on $\Pi$.
By definition and the constraints given in the rule, we obtain:
\begin{align*}
\spotential{\sigma}{\Gamma,\typed{x}{T}}{Q} &= \sum_{i} q_i \rk(t_i) + q_{m+1} \rk(t) +
                                              \sum_{\veca, a, c} \log(\veca\size{\vect} + c)
\\
& = \sum_{i} q_i \rk(t_i) + q_{m+1} \rk(\tree{u}{b}{v}) + 
\sum_{\veca, a, c} q_{(\veca, a, c)} p_{(\veca, a, c)} (\vect, t) 
\\
&=
\sum_{i} q_i \rk(t_i) + q_{m+1}(\rk(u)+\log(\size{u})+\log(\size{v})+\rk(v)) + {}
\\
&\phantom{=}
{} + \sum_{\veca, a, c} \log(\veca \size{\vect} + a(\size{u}+\size{v}) + c)
\\
&= \spotential{\sigma}{\Gamma, \typed{x_1}{T}, \typed{x_2}{B}, \typed{x_3}{T}}{R}
\tkom
\end{align*}
where, we shortly write $\veca\size{\vect}$ ($\vecb\size{\vecu}$) to
denote componentwise multiplication.

Thus $\spotential{\sigma}{\Gamma,\typed{x}{T}}{Q} = \spotential{\sigma}{\Gamma, \typed{x_1}{T}, \typed{x_2}{B}, \typed{x_3}{T}}{R}$ and the theorem follows by an application of MIH.

\emph{Case}.
Consider the $\vlet$ rule, that is, $\Pi$ ends in the following rule:
\begin{equation*}
  \infer{\eval{\sigma}{m}{\vlet\ x~\equal~e_1\ \vin\ e_2}{v}}{%
     \eval{\sigma}{m_1}{e_1}{v'}
     &
     \eval{\sigma[x \mapsto v']}{m_2}{e_2}{v}
  }
  \tkom
\end{equation*}
where $m = m_1+m_2$. Wlog.\ $\Xi$ ends
in the following application of the $\letrule$-rule.
\begin{equation*}
  \infer{\tjudge{\Gamma, \Delta}{Q}{\vlet\ x~\equal~e_1\ \vin\ e_2}{C}{Q'}}{%
  \tjudge{\Gamma}{P}{e_1}{T}{P'}
  &
  \tjudgecf{\Gamma}{P_{\vecb}}{e_1}{T}{P'_{\vecb}}
  &
  \tjudge{\Delta, \typed{x}{T}}{R}{e_2}{C}{Q'}
  }
  \tpkt
\end{equation*}
Recall that $\veca = \seq[m]{a}$, $\vecb = \seq[k]{b}$, 
$i \in \{1,\dots,m\}$, $j \in \{1,\dots,k\}$ and $a,c \in \Qplus$.
Further, the annotations $Q$, $P$, $R$ are annotation of length $m+k$, $m$ and
$k$, respectively,  while $Q',P',R'$ are of length $1$. For each sequence 
$\seq[k]{b} \not= \vec{0}$, $P_{\vecb}$ denotes an annotation of length $m$.
We tacitly assume that $\veca \not = \vec{0}$, as well as $\vecb \not = \vec{0}$.
Furthermore, recall the convention that
the sequence elements of annotations are denoted by the lower-case letter of the
annotation, potentially with corresponding sub- or superscripts.

By definition and due  to the constraints expressed in the typing rule, we have
\begin{align*}
  \spotential{\Gamma,\Delta}{Q} & = 
  \sum_i q_i \rk(t_i) + \sum_j q_j \rk(u_j) +
  \sum_{\veca,\vecb,c} q_{(\veca,\vecb,c)} \log(\veca\size{\vect} + \vecb\size{\vecu} + c)
  \\
  \spotential{\Gamma}{P} & = \sum_i q_i \rk(t_i) + \sum_{\veca,c} q_{(\veca, \vec{0}, c)} 
\log (\veca \size{\vect} + c)
  \\
  \potential{v'}{P'} &= r_{k+1} \rk(v') + \sum_{a,c} r_{(\vec{0},a,c)} \log (a\size{v} +c)
  \\
  \spotential{\Gamma}{P_{\vecb}} &= \sum_{\veca,c} q_{(\veca,\vecb,c)} \log (\veca \size{\vect} + c)
  \\
  \potential{v'}{{P'}_{\vecb}} &= \sum_{a,c} r_{(\vec{b},a,c)} \log (a\size{v} +c)
  \\
  \spotential{\Delta,\typed{x}{T}}{R} & = \sum_{j} q_j \rk(u_j) + r_{k+1} \rk(v') +
\sum_{\vecb,a,c} r_{(\vecb,a,c)} \log(\vecb\size{\vecu}+a\size{v}+c)
  \tkom
\end{align*}
where we set $\vect \defsym \seq[m]{t}$ and $\vecu \defsym \seq[k]{u}$, denoting
the substitution instances of the variables in $\Gamma$, $\Delta$, respectively.


By main induction hypothesis, we conclude that
$\spotential{\Gamma}{P} - \potential{v'}{P'} \geqslant m_1$, while for all
$\vecb \not= \vec{0}$, $\spotential{\Gamma}{P^{\vecb}} \geqslant \potential{v}{{P'}^{\vecb}}$.
A second application of MIH yields that 
$\spotential{\Delta,\typed{x}{T}}{R} - \potential{v}{Q'} \geqslant m_2$. 
Due to Lemma~\ref{l:3}, we can combine these two results and
conclude the theorem. 
\end{proof}


\begin{remark}
  As remarked in Section~\ref{ResourceFunctions} the basic resource functions can be generalised
  to additionally represent linear functions in the size of the arguments. The above soundness theorem is
  not affected by this generalisation.
\end{remark}

\section{Analysis}
\label{Analysis}

In this section, we exemplify the use of the type system presented in the last section
on the function \lstinline{splay}, cf.~Figure~\ref{fig:1}. 
Our amortised analysis of splaying yields that the amortised cost
of \lstinline{splay a t} is bound by $1 + 3\log(\size{t})$, where the actual
cost count the number of calls to \lstinline{splay}, cf.~\cite{ST:1985,Schoenmakers93,Nipkow:2015}.
To verify this declaration, we derive
\begin{equation}
  \label{eq:splay:1}
  \tjudge{\typed{a}{B},\typed{t}{T}}{Q}{e}{T}{Q'} \tkom
\end{equation}
where the expression $e$ is the definition of \lstinline{splay} given in
Figure~\ref{fig:1}. We restrict to the zig-zig case: $t = \tree{\tree{bl}{b}{br}}{c}{cr}$
together with the recursive call \lstinline[mathescape]{splay a bl = $\tree{al}{a'}{ar}$} and
$a < b < c$. Thus \lstinline{splay a t} yields $\tree{al}{a'}{\tree{ar}{b}{\tree{br}{c}{cr}}} =: t'$.
Recall that $a$ need not occur in $t$, in this case the last element $a'$ before
a leaf was found is rotated to the root.

Let $e_1$ denote the subexpression of the definition of splaying, starting in
program line $4$. On the other hand let $e_2$ denote the subexpression defined from
line $5$ to $15$ and let $e_3$ denote the program code within $e_2$ starting
in line $7$. Finally the expression in lines $11$ and $12$, 
expands to the following, if we remove part of the the syntactic sugar.
\begin{equation*}
  e_4 \defsym \vlet\ x\ \equal\ \text{\lstinline{splay a bl}}\ \vin\ 
    \match\ x\ \with\ \text{\lstinline{|}} \nil\ \arrow \nil\
    \text{\lstinline{|}} \tree{al}{a'}{ar}\ \arrow t'
    \tpkt
\end{equation*}

\begin{figure}
\centering
\begin{equation*}
\infer={\tjudge{\typed{a}{B},\typed{t}{T}}{Q}{\match\ t\ 
    \with \text{\lstinline{|}} \nil\  \arrow\ \nil
    \text{\lstinline{|}} \tree{cl}{c}{cr}\ \arrow\ e_1}{T}{Q'}}{%
  \infer{\tjudge{\typed{a}{B},\typed{cl}{T},\typed{c}{B},\typed{cr}{T}}{Q_1}{\cif\ a=c\ 
    \cthen\ \tree{cl}{c}{cr}\ \celse\ e_2}{T}{Q'_1}}{%
  \infer={%
     \tjudge{\typed{a}{B},\typed{b}{B},\typed{cl}{T},\typed{cr}{T}}{Q_1}{\match\ cl\ \with\
     \text{\lstinline{|}} \nil\ \arrow\ \tree{cl}{c}{cr}
     \text{\lstinline{|}} \tree{bl}{b}{br}\ \arrow\ e_3}{T}{Q'}}{%
     \infer[(\weak)]{\tjudge{\Gamma,\typed{cr}{T},\typed{bl}{T},\typed{br}{T}}{Q_2}{e_3}{T}{Q'}}{%
       \infer={\tjudge{\Gamma,\typed{cr}{T},\typed{bl}{T},\typed{br}{T}}{Q_3}{e_4}{T}{Q'}}{%
         \infer{\tjudge{\typed{a}{B},\typed{bl}{T}}{Q+1}{\text{\lstinline{splay a bl}}}{T}{Q'}}{%
           \typed{f}{ \atypdcl{A_1 \times \cdots \times A_n}{Q}{A'}{Q'}}}
         &
         \infer={\tjudge{\Delta,\typed{x}{T}}{Q_4}{\match\ x\ \with\ 
             \text{\lstinline{|}} \tree{al}{a'}{ar}\ \arrow t'}{T}{Q'}}{%
            \tjudge{\Delta,\typed{al}{T},\typed{a'}{B},\typed{ar}{T}}{Q_5}{t'}{T}{Q'}
         }
       }
     }
   }
  }
}
\end{equation*}
\caption{Partial Typing Derivation for \lstinline{splay}, Focusing on the Zig-Zig Case.}
\label{fig:6}
\end{figure}

Figure~\ref{fig:6} shows a simplified derivation of~\eqref{eq:splay:1}, where
we  have focused only on a particular path in the derivation tree, suited
to the assumption on $t$. Omission of premises is indicated by double lines in the inference
step. We abbreviate
$\Gamma \defsym \typed{a}{B},\typed{b}{B},\typed{c}{C}$,
$\Delta \defsym \typed{b}{B},\typed{c}{C},\typed{cr}{T},\typed{br}{T}$.
Here, we use the following annotations, induced by constraints in the type system, cf.~Figure~\ref{fig:5}.
\begin{align*}
  Q &\colon q=1, q_{1,0} = 3, q_{0,2} = 1 \tkom
  \\
  Q' &\colon q'_\ast =1 \tkom
  \\
  Q_1 &\colon q^1_1 = q^1_2 = q = 1, q^1_{1,1,0} = q_{1,0} = 3,
      q^1_{1,0,0} = q^1_{0,1,0} = q = 1,
      q^1_{0,0,2} = q_{0,2} = 1 \tkom
  \\
  Q_2 &\colon q^2_1 = q^2_2 = q^2_3 = 1,
        q^2_{0,0,2} = 1,
        q^2_{1,1,1,0} = q^1_{1,1,0} = 3, 
        q^2_{0,1,1,0} = q^1_{1,0,0} = 1,
  \\
      & \quad 
        q^2_{1,0,0,0} = q^1_{0,1,0} = 1,
        q^2_{0,1,0,0} = q^2_{0,0,1,0} = q^1_1 = 1 \tkom
  \\
  Q_3 &\colon q^3_1=q^3_2=q^3_3 =1,
        q^3_{0,0,2} = 2, 
        q^3_{0,1,0,0} = 3, q^3_{0,0,1,0} = 1, q^3_{1,0,0,0} = q^3_{1,0,1,0} = q^3_{1,1,1,0} =1 \tpkt
\end{align*}
We emphasise that a simple symbolic calculation, following the heuristics
outlined on page~\pageref{d:heuristics} suffices to conclude the following
inequality, employed in the indicated weakening step in Figure~\ref{fig:6}.
\begin{equation*}
  \potential{\Gamma,\typed{cr}{T},\typed{bl}{T},\typed{br}{T}}{Q_2} \geqslant 
  \potential{\Gamma,\typed{cr}{T},\typed{bl}{T},\typed{br}{T}}{Q_3} \tpkt
\end{equation*}
We verify the correctness of the weakening step through a direct comparision.
Let $\sigma$ be a substitution. Then, we have
\begin{align*}
  \spotential{\typed{cr}{T},\typed{bl}{T},\typed{br}{T}}{Q_2} & =
  1 + \rk(cr)+\rk(bl)+\rk(br) + {} \\
  & \quad {} + 3\log(\size{cr}) + 3\log(\size{bl}) + 3\log(\size{br}) +  {}\\
  & \quad {} +    \log(\size{bl}+\size{br}) + \log(\size{cr}) + \log(\size{bl}) + \log(\size{br})      
  \\[1ex]
  & =
  1 + \rk(cr)+\rk(bl)+\rk(br) + 2\log(\size{t}) + \log(\size{t}) + {}
  \\
  & \quad {} + \log(\size{bl}+\size{br}) + \log(\size{cr}) + \log(\size{bl}) + \log(\size{br})
  \\[1ex]
  & \geqslant 
  1 + \rk(cr)+\rk(bl)+\rk(br) +  \log(\size{bl}) + \log(\size{br}) + {}
  \\
  & \quad {} + \log(\size{bl}+\size{br}) + \log(\size{cr}) +  \log(\size{bl}) + {}
  \\
  & \quad {} + \log(\size{br}+\size{cr}) + 2 + \log(\size{bl}+\size{br}+\size{cr})
  \\[1ex]
  & \geqslant \rk(bl) + 1+ 3\log(\size{bl}) +  \rk(cr)+\rk(br)+ \log(\size{br}) + {}
  \\ 
  & \quad {} + \log(\size{cr}) + \log(\size{br}+\size{cr}) + {}
  \\
  & \quad {} + \log(\size{bl}+\size{br}+\size{cr}) + 1
  =  \spotential{\typed{cr}{T},\typed{bl}{T},\typed{br}{T}}{Q_3}
  \tpkt
\end{align*}
Note that we have used Lemma~\ref{l:1} in the third line to conclude
\begin{equation*}
  2 \log(\size{t}) \geqslant \log(\size{bl}) + \log(\size{br}+\size{cr}) + 2 \tkom
\end{equation*}
%
as we have $\size{t} = \size{\tree{\tree{bl}{b}{br}}{c}{cr}} = \size{bl} + \size{br} + \size{cr}$.
Furthermore, we have only used monotonicity of $\log$ and formal simplifications.
In particular all necessary steps are covered in the simple heurstics introduced in Section~\ref{Typesystem}.

Furthermore, the $(\letrule)$-rule is applicable 
with respect to the following annotation $Q_4$:
\begin{align*}
  Q_4 & \colon q^4_1 = q^4_2 = q^4_3 = 1, q^4_{1,0,0,0} = q^4_{0,1,0,0} = q^4_{1,1,0,0} = q^4_{1,1,1,0} = 1 \tpkt
\end{align*}
It suffices to verify the cost-free typing relation
\begin{equation}
\label{eq:splay:2}
\tjudgecf{\typed{a}{B}, \typed{bl}{T}}{P_{\vecb}}{\text{\lstinline{splay a bl}}}{T}{P'_{\vecb}}
\tkom  
\end{equation}
where $\vecb = (b_1,b_2) \not= \vec{0}$. Note that the condition~\eqref{eq:splay:2} has been omitted
from Figure~\ref{fig:6} to allow for a condensed presentation.
Informally speaking~\eqref{eq:splay:2}
requires that in a cost-free computation the potential is preserved. The interesting sub-case is
the case for $\vecb = (1,1)$, governed by the annotations $P_{1,1}$ and $P'_{1,1}$, respectively.
%
%
The corresponding potentials are $\spotential{\typed{a}{B}, \typed{bl}{T}}{P_{1,1}} = \log(\size{bl})$
and $\spotential{\typed{a}{B}, \typed{bl}{T}}{P'_{1,1}} = \log(\size{\tree{al}{a'}{ar}})$. 
As $\size{bl} = \size{\tree{al}{a'}{ar}}$ by definition of \lstinline{splay}, the potential remains unchanged
as required.

Finally, one further application of the $\matchrule$-rule yields the desired 
derivation for suitable~$Q_5$. 

\section{Towards Automatisation}
\label{Implementation}

In this short section, we argue that the above introduced potential-based amortised resource
analysis is automatable. 
As emphasised in Section~\ref{Primer} the principal approach to
automatisation is to set up annotations with indeterminate coefficients and solve
for them so as to automatically infer costs. The corresponding constraints are obtained
through a syntax-directed type inference. 
In the context of the type system presented in Figure~\ref{fig:5} an obvious challenge is the
requirement to compare potentials symbolically (compare~Section~\ref{ResourceFunctions}) rather
than compare annotations directly. 
More generally, the presence of logarithmic basic functions necessitates
the embodiment of nonlinear arithmetic.

A straightforward approach for automation would exploit recent advances in SMT solving. For this
one can suitable incorporating the required nonlinear arithmetic as axioms to an off-the-shelf
solver and pass the constraints to the solver. We have experimented with this approach, but the approach
has turned out to be too inefficient. In particular, as we cannot enforce \emph{linear} constraints.

However, a more refined and efficient approach which targets linear constraints is achievable as follows.
All logarithmic terms, that is, terms of the form $\log(.)$ are replaced by new variables, focusing on finitely many.
For the latter we exploit the condition that in resource annotation only finitely many coefficients are non-zero.
Wrt.\ the example in the previous section, $\log(\size{bl}+\size{br})$, $\log(\size{bl})$ are replaced by
the fresh (constraint) variables $x$, $y$, respectively.
Thus laws of the monotonicity function, like e.g.\ monotonicity of $\log$, as well as properties like Lemma~\ref{l:1} 
can be expressed as inequalities over the introduced unknowns. E.g., the inequality $x \geqslant y$ represents
the axiom of monotonicity $\log(\size{bl}+\size{br}) \geqslant \log(\size{bl})$. All such obtained inequality
are collected as ``expert knowledge''.
We can express the required expert knowledge succinctly in the form of a system of inequalities
as $Ax \leqslant b$, where $A$ denotes a matrix with as many rows as we have expert knowledge,
$\vec{b}$ a column vector and $\vec{x}$ 
the column vector of unknowns of suitable length. 
With the help of the variables in $\vec{x}$, we construct linear combinations based on indeterminate coefficients
giving rise to the potential functions fulfilling the constraints gathered from type inference. More precisely,
we have to solve the implication
%
\begin{equation}
\label{eq:constraint}
\forall \vec{x}\ A\vec{x} \leqslant \vec{b} \Rightarrow CY\vec{x} \leqslant \vec{d}
\tpkt   
\end{equation}
Here $C$ is the matrix of coefficients and the matrix $Y$ represents the
required linear combinations.

In order to automate the derivation of~\eqref{eq:constraint}, we exploit the following
variant of Farkas' Lemma. 

\begin{lemma}
\label{l:farkas}  
  Suppose  $A\vec{x} \leqslant \vec{b}$ is solvable. Then the following assertions are
  equivalent.
  \begin{align}
    \forall \vec{x}\ A\vec{x} \leqslant \vec{b} \Rightarrow \vec{u}^T\vec{x} \leqslant \lambda
    \label{eq:farkas:1}
    \\
    \exists \vec{f} \ f \geqslant 0 \land \vec{u}^T \leqslant \vec{f}^T A \land \vec{f}^T \vec{b} \leqslant \lambda
    \label{eq:farkas:2}
  \end{align}
\end{lemma}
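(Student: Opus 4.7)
The direction $(\ref{eq:farkas:2}) \Rightarrow (\ref{eq:farkas:1})$ is the easy half. Given a witness $\vec{f} \geqslant 0$ satisfying $\vec{u}^T \leqslant \vec{f}^T A$ and $\vec{f}^T \vec{b} \leqslant \lambda$, and any $\vec{x}$ with $A\vec{x} \leqslant \vec{b}$, the chain
\[
\vec{u}^T \vec{x} \;\leqslant\; \vec{f}^T A \vec{x} \;\leqslant\; \vec{f}^T \vec{b} \;\leqslant\; \lambda
\]
yields $(\ref{eq:farkas:1})$; the first step uses nonnegativity of the components of $\vec{x}$ (which in our intended application are values of $\log(\cdot)$ and therefore nonnegative), while the second uses $\vec{f} \geqslant 0$ together with $A\vec{x} \leqslant \vec{b}$.

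For the converse $(\ref{eq:farkas:1}) \Rightarrow (\ref{eq:farkas:2})$, the plan is to recognise the statement as a reformulation of strong LP duality. Consider the primal linear program
\[
\max \, \vec{u}^T \vec{x} \quad \text{subject to} \quad A\vec{x} \leqslant \vec{b},
\]
which is feasible by hypothesis and whose objective is bounded above by $\lambda$ by~$(\ref{eq:farkas:1})$. Its dual program is
\[
\min \, \vec{b}^T \vec{f} \quad \text{subject to} \quad A^T \vec{f} = \vec{u}, \;\; \vec{f} \geqslant 0.
\]
By the strong duality theorem, whenever the primal is feasible and bounded, the dual is feasible and attains the same optimum. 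Taking any dual-optimal $\vec{f}$ gives $\vec{f} \geqslant 0$, $\vec{f}^T A = \vec{u}^T$ (so in particular $\vec{u}^T \leqslant \vec{f}^T A$) and $\vec{f}^T \vec{b} \leqslant \lambda$, which is exactly $(\ref{eq:farkas:2})$.

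The only real obstacle is the invocation of strong LP duality, which is itself a nontrivial classical result. A self-contained route would be to apply Fourier--Motzkin elimination to the combined system $\{A\vec{x} \leqslant \vec{b},\ \vec{u}^T\vec{x} > \lambda\}$: by $(\ref{eq:farkas:1})$ this system is infeasible, and Fourier--Motzkin produces nonnegative multipliers combining the rows of $A\vec{x} \leqslant \vec{b}$ that certify this infeasibility, giving precisely the required $\vec{f}$. Alternatively one can appeal to the separating-hyperplane theorem applied to the convex cone generated by the rows of $A$ and the vector $\vec{u}$. As the statement is entirely standard (a mild affine variant of Farkas' Lemma), I would in the final presentation prove only the easy direction explicitly and cite a textbook such as Schrijver's \emph{Theory of Linear and Integer Programming} for the converse, rather than reproduce either classical argument in detail.
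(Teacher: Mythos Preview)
Your proof is correct and follows essentially the same route as the paper: the easy direction via the chain $\vec{u}^T\vec{x} \leqslant \vec{f}^T A\vec{x} \leqslant \vec{f}^T\vec{b} \leqslant \lambda$, and the converse via strong LP duality for the primal $\max\{\vec{u}^T\vec{x} : A\vec{x}\leqslant\vec{b}\}$ against its dual $\min\{\vec{f}^T\vec{b} : \vec{f}^T A = \vec{u}^T,\ \vec{f}\geqslant 0\}$, extracting a dual optimum as the required $\vec{f}$. The paper likewise obtains the equality $\vec{f}^T A = \vec{u}^T$ from the dual and notes that solvability of $A\vec{x}\leqslant\vec{b}$ is only needed for this direction. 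Your additional remarks on Fourier--Motzkin and separating hyperplanes are not in the paper but are reasonable alternatives; your observation that the step $\vec{u}^T\vec{x} \leqslant \vec{f}^T A\vec{x}$ relies on $\vec{x}\geqslant 0$ when only $\vec{u}^T \leqslant \vec{f}^T A$ is assumed is a point the paper passes over silently (its proof of the converse in fact yields equality, which would make the issue moot).
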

\begin{proof}
  It is easy to see that from \eqref{eq:farkas:2}, we obtain \eqref{eq:farkas:1}. Assume~\eqref{eq:farkas:2}. Assume further that
  $A\vec{x} \leqslant \vec{b}$ for some column vector $\vec{x}$. Then we have
  \begin{equation*}
    \vec{u}^T\vec{x} \leqslant \vec{f}^T A \vec{x} \leqslant \vec{f}^T \vec{b} \leqslant \lambda
    \tpkt
  \end{equation*}
  Note that for this direction the assumption that $A\vec{x} \leqslant \vec{b}$ is solvable is not required.

  With respect to the opposite direction, we assume~\eqref{eq:farkas:1}. By assumption, the inequality $A\vec{x} \leqslant \vec{b}$
  is solvable. Hence, maximisation of $\vec{u}^T \vec{x}$ under the side condition  $A\vec{x} \leqslant \vec{b}$ is feasible. Let
  $w$ denote the maximal value. Due to~\eqref{eq:farkas:1}, we have $w \leqslant \lambda$.

  Now, consider the dual asymmetric
  linear program to minimise  $\vec{y}^T \vec{b}$ under side condition $\vec{y}^T A = \vec{u}^T$ and $\vec{y} \geqslant 0$. Due to
  the Dualisation Theorem, the dual problem is also solvable with the same solution
  \begin{equation*}
    \vec{y}^T \vec{b} = \vec{u}^T \vec{x} = w \tpkt
  \end{equation*}
  We fix a vector $\vec{f}$ which attains the optimal value $w$, such that $\vec{f}^T A = \vec{u}^T$ and $\vec{f} \geqslant 0$
  such that $\vec{f}^T \vec{b} = w \leqslant lambda$. This yields~\eqref{eq:farkas:2}. 
\end{proof}

Generalising Lemma~\ref{l:farkas}, we obtain the following equivalence, which allows an
efficient encoding of~\eqref{eq:constraint}. 
%
%
\begin{equation*}
\forall \vec{x}\ A\vec{x} \leqslant \vec{b} \Rightarrow U\vec{x} \leqslant \vec{v} \Leftrightarrow
\exists F \geqslant 0 \land U \leqslant FA \land F\vec{b} \leqslant \vec{v}  
\tpkt
\end{equation*}
As in the lemma, the equivalence requires solvability of the system $A\vec{x} \leqslant \vec{b}$. 
Note that the system expresses given domain knowledge and simple facts like Lemma~\ref{l:1}, whose solvablity is given a priori.
We emphasise that the existential statement requires linear constraints only.

\section{Conclusion}
\label{Conclusion}

We have presented a novel amortised resource analysis based on the potential method. The
method is rendered in a type system, so that resource analysis amounts to a constraint satisfaction
problem, induced by type inference. 
The novelty of our contribution is that this is the first automatable
approach to logarithmic amortised complexity. In particular, we show how the precise logarithmic complexity
of \emph{splaying}, a central operation of Sleator and Tarjan's splay trees can be analysed in our system. 
Furthermore, we provide a suitable \emph{Ansatz} to automatically infer logarithmic bounds on the
runtime complexity.

\medskip
\paragraph*{In Memorium.} With deep sorrow, I report that Martin had a fatal hiking accident 
during the preparation of this work. He passed away in January, 2018. 
I've tried my best to finalise our common conceptions and ideas, any mistakes or other defects introduced are of course my responsibility.
His work was revolutionary in a vast amount of fields and it
will continue to inspire future researchers; like he inspired me. 


\end{document}